\renewcommand\footnotetextcopyrightpermission[1]{} 
\begin{document}

\title{QR Sort: A Novel Non-Comparative Sorting Algorithm}

\author{Randolph T. Bushman}
\email{rbushma1@jhu.edu}
\orcid{0009-0003-3685-8850} 
\affiliation{%
  \institution{Johns Hopkins University Applied Physics Laboratory}
  \city{Laurel}
  \state{Maryland}
  \country{USA}
}

\author{Tanya M. Tebcherani}
\email{tebcherani.tanya@gmail.com}
\orcid{0000-0002-8357-9040}
\affiliation{%
  \institution{Johns Hopkins University Applied Physics Laboratory}
  \city{Laurel}
  \state{Maryland}
  \country{USA}
}

\author{Alhassan S. Yasin}
\email{ayasin1@jhu.edu}
\orcid{0009-0001-8033-9850}
\affiliation{%
  \institution{Johns Hopkins University}
  \city{Baltimore}
  \state{Maryland}
  \country{USA}
}

\begin{abstract}
In this paper, we introduce and prove QR Sort, a novel non-comparative integer sorting algorithm. This algorithm uses principles derived from the Quotient-Remainder Theorem and Counting Sort subroutines to sort input sequences stably. QR Sort exhibits the general time and space complexity $\mathcal{O}(n+d+\frac{m}{d})$, where $n$ denotes the input sequence length, $d$ denotes a predetermined positive integer, and $m$ denotes the range of input sequence values plus 1. Setting $d = \sqrt{m}$ minimizes time and space to $\mathcal{O}(n + \sqrt{m})$, resulting in linear time and space $\mathcal{O}(n)$ when $m \leq \mathcal{O}(n^2)$. We provide implementation optimizations for minimizing the time and space complexity, runtime, and number of computations expended by QR Sort, showcasing its adaptability.  Our results reveal that QR Sort frequently outperforms established algorithms and serves as a reliable sorting algorithm for input sequences that exhibit large $m$ relative to $n$. 

\end{abstract}



\keywords{Integer Sorting, Linear Sorting, Stable Sorting, Complexity Analysis}


\maketitle
\fancyfoot{}
\thispagestyle{empty}

\section{Introduction}
In computer science, sorting algorithms arrange sequences of values into a specific order to improve data processing and optimize performance across various computational tasks \cite{cormen_et_al}. Common operations that benefit from sorting include binary search for quick lookup \cite{lin_binary_search}, median finding for statistical analysis \cite{schonhage_median_finding}, and prioritization for organizing tasks~\cite{javed_requirement_prioritization}. 

Comparison-based algorithms represent a common sorting paradigm that leverages a specified abstract comparison operator, such as ``less than or equal to," to order elements in a given sequence \cite{dey_comparison_based_sorting_complexity}. These algorithms possess the proven lower bound time complexity $\mathcal{O}(n\log{n})$, where $n$ denotes the length of the input sequence $S$ \cite{morris_sorting_theorems}. Time complexity helps assess the general performance of algorithms by bounding the computational time relative to the given input size. Likewise, space complexity helps assess algorithm performance by bounding the memory required relative to input size \cite{cormen_et_al}. Comparison-based algorithms may each possess different space complexities. Merge Sort and Quicksort \cite{cormen_et_al} serve as classic examples of comparison-based sorting algorithms. We provide the time and space complexities of these algorithms in Table \ref{tab:complexity_comparison}.

Non-comparative integer sorting algorithms classify a distinct group of sorting methods with no proven lower-bound time complexity greater than $\mathcal{O}(n)$. These algorithms associate input values with integer keys and distribute them into ordered bins to enable efficient derivation of the final sorted order, circumventing the comparison-based lower bound~\cite{cormen_et_al}.

This paper introduces QR Sort, a novel non-comparative integer sorting algorithm that divides each input element by a user-specified divisor and uses the acquired quotient and remainder values as sorting keys. We proved QR Sort qualifies as a stable sorting algorithm that ensures elements with equal keys maintain their initial relative order after sorting \cite{tang_sorting_introduction}. We also provided implementation optimizations for the time and space complexity, runtime, and number of computations expended by QR Sort. Our results reveal that QR Sort frequently outperforms established algorithms and serves as a reliable sorting algorithm for input sequences that exhibit larger input sequence element
ranges.

\section{QR Sort}\label{qr_sort_section}
We developed QR Sort, a stable integer sorting algorithm derived from the Quotient-Remainder Theorem, to improve performance and efficiency compared to existing sorting algorithms. In this section, we provide a description and proof of the QR Sort algorithm.

\subsection{General Algorithm}\label{qr_sort_section_general_algorithm}
Let $S = \{s_1, s_2, \ldots, s_n\}$ represent a sequence of $n$ integers and let the integer $m$ denote the range of values in $S$ plus one, such that $m = \max(S) - \min(S) + 1$. Let $f(S, K)$ denote a stable sorting function that sorts each $s_i$ in $S$ by its corresponding key $k_i$ in the sequence $K$. QR Sort employs sequences of normalized remainder and quotient keys $R$ and $Q$, detailed in Eq. \eqref{equation_k_r} and Eq. \eqref{equation_k_q}, where $d$ equals some predetermined positive integer divisor. QR Sort aligns the keys $r_i$ and $q_i$ with $s_i$ such that any changes made to the order of elements in $S$ reflect the relative ordering of $R$ and $Q$.

\begin{figure*}
    \centering
    \includegraphics[width=\linewidth]{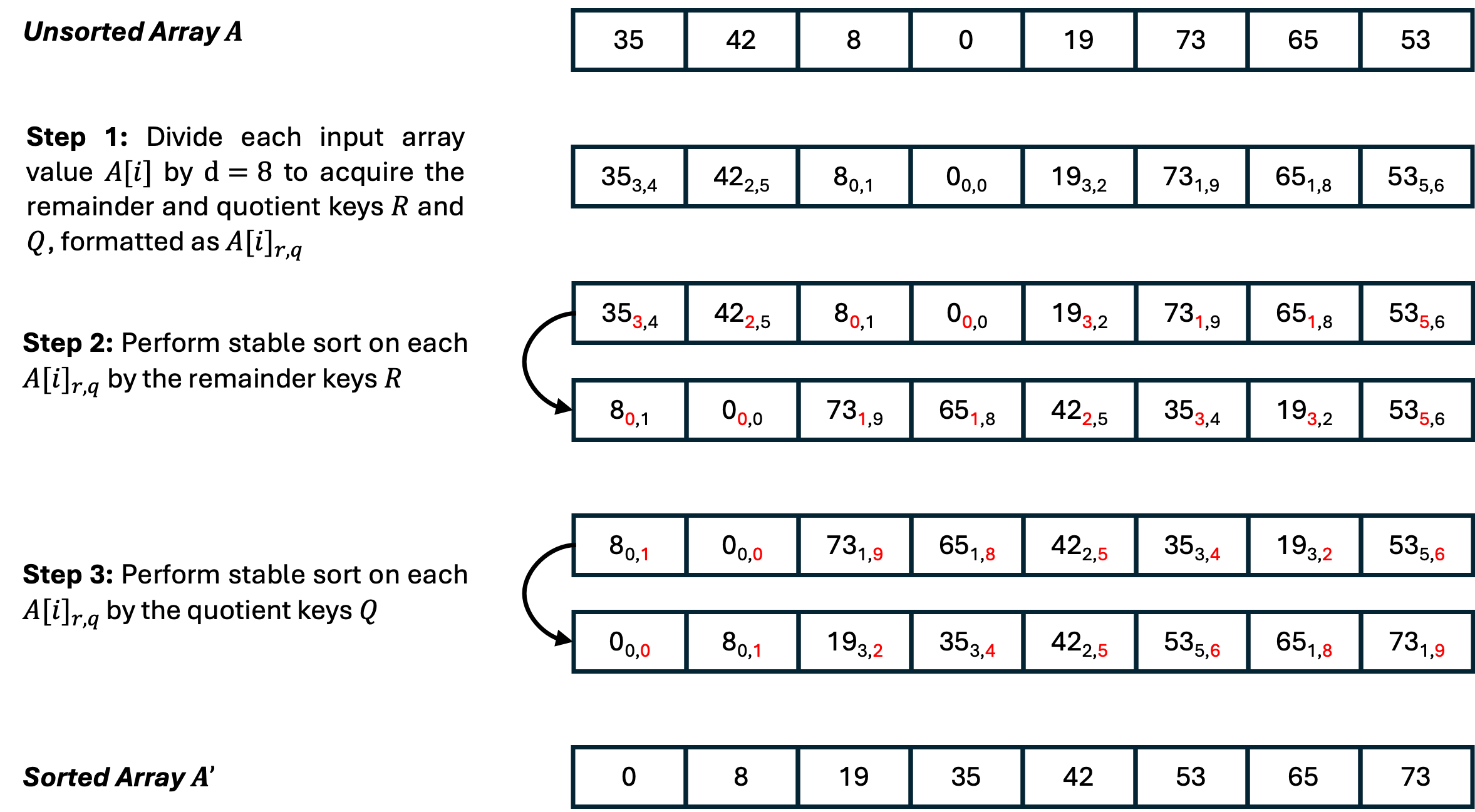}
    \caption{Visualization of the sorting procedure of the QR Sort Algorithm given an integer array input with length $n = 8$. For the divisor $d$, we selected the optimal divisor described in Proposition \ref{OptimalDivisorTheorem_RQ}, or $d = \sqrt{m} = \sqrt{73-0+1} = 8$. }
    \label{fig:qr_fig}
    \Description{}
\end{figure*}

\begin{equation}\label{equation_k_r}
r_i = (s_i - \min(S)) \bmod d 
\end{equation}

\begin{equation}\label{equation_k_q}
q_i = \lfloor \frac{s_i-\min(S)}{d} \rfloor
\end{equation}

QR Sort conducts two stable sorting routines such that the final sorted sequence $S' = f(f(S, R), Q)$. Here, QR Sort first sorts $S$ by the remainder keys $R$, then sorts the resulting sequence by the quotient keys $Q$ to obtain the final sorted sequence $S'$. We note that QR Sort reduces the key space so effectively that the combined work of its two sorting subroutines often results in fewer computations performed than algorithms that require a single sorting pass. We illustrated QR Sort sorting an integer array in Figure \ref{fig:qr_fig}. If $\lfloor\frac{m}{d}\rfloor = 0$, it indicates every $q_i$ equals zero and renders the stable sort by $Q$ redundant and unnecessary. Thus, the final sorted sequence simplifies to $S' = f(S, R)$.

\subsection{Proof of Sorting and Stability}\label{qr_sort_section_proof}
QR Sort constitutes a stable sorting algorithm that orders the elements in an input sequence $S$ to produce a final sorted sequence $S'$. First, we define a stable sorting algorithm in Definition \ref{stable_def}. 

\indent Specifically, QR Sort produces the final sorted sequence $S' = f(S_R, Q)$, where $S_R = f(S, R)$. QR Sort computes the remainder and quotient keys $r_i$ and $q_i$ in sequences $R$ and $Q$, respectively, via Eq. \eqref{equation_k_r} and Eq. \eqref{equation_k_q}, where the divisor $d$ denotes some positive integer. The proofs for the guaranteed sorting and stability of QR Sort follow. 

\begin{lemma}\label{Lemma}
Let $S$ denote a sequence of integers, and for each $s_i$ in $S$, define the remainder $r_i$ and quotient $q_i$ using Eqs.~\eqref{equation_k_r}~and~\eqref{equation_k_q}, where $d$ denotes some positive integer divisor. If $s_i < s_j$ and $r_i \geq r_j$, then $q_i < q_j$.
\end{lemma}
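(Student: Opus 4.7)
The plan is to unpack the two definitions via the Quotient-Remainder Theorem and then compare the resulting decompositions under the two hypotheses $s_i < s_j$ and $r_i \geq r_j$.

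First I would normalize by setting $a_i = s_i - \min(S)$ and $a_j = s_j - \min(S)$, so that both are nonnegative integers with $a_i < a_j$ (translated from $s_i < s_j$). By construction of Eqs.~\eqref{equation_k_r} and \eqref{equation_k_q}, these satisfy the canonical Quotient-Remainder decompositions $a_i = q_i d + r_i$ and $a_j = q_j d + r_j$ with $0 \le r_i, r_j < d$. This reframing is the key move: everything else becomes elementary integer arithmetic.

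Next I would subtract the two decompositions to obtain
\[
a_j - a_i = (q_j - q_i)\,d + (r_j - r_i),
\]
and rearrange as $(q_j - q_i)\,d = (a_j - a_i) + (r_i - r_j)$. By the first hypothesis the term $a_j - a_i$ is strictly positive, and by the second hypothesis $r_i - r_j \geq 0$. Therefore $(q_j - q_i)\,d > 0$, and since $d$ is a positive integer and the $q$'s are integers, this forces $q_j - q_i \geq 1$, i.e., $q_i < q_j$, which is exactly the conclusion.

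I do not anticipate a serious obstacle here: the only subtlety is bookkeeping the direction of the inequality when moving $(r_i - r_j)$ across the equality and making sure to invoke integrality of $q_i, q_j$ to upgrade the strict positivity of $(q_j - q_i)d$ to the strict inequality $q_i < q_j$. Neither the size of $d$ relative to $m$ nor the bound $r_i, r_j < d$ is actually needed for this particular lemma, so the proof remains short and self-contained.
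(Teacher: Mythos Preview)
Your proof is correct and rests on the same core idea as the paper---unpacking the Quotient-Remainder decomposition $a = qd + r$ for the normalized values---but you reach the conclusion more directly. The paper argues in two stages: it first uses monotonicity of the floor function to obtain $q_i \le q_j$, and then assumes $q_i = q_j$ for contradiction, substituting into $dq_i + r_i < dq_j + r_j$ to force $r_i < r_j$ against the hypothesis. You instead subtract the two decompositions once and read off $(q_j - q_i)\,d = (a_j - a_i) + (r_i - r_j) > 0$ immediately from the two hypotheses. Your route is shorter and avoids the separate floor-monotonicity step; the paper's route has the minor expository advantage that the intermediate inequality $q_i \le q_j$ is reused verbatim in the proof of Theorem~\ref{SortTheorem}. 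One small remark: you do not actually need integrality of the $q$'s at the end, since $(q_j - q_i)d > 0$ with $d > 0$ already yields $q_j > q_i$ over the reals.
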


\begin{proof}
\noindent Let $s_i < s_j$. We subtract $\min(S)$ from both sides, floor divide by $d$ on the results, and use Eq. \eqref{equation_k_q} to derive Eq.~\eqref{eq_qi_leq_qj}. 

\begin{equation}\label{eq_qi_leq_qj}
 s_i < s_j \implies \lfloor \frac{s_i-\min(S)}{d} \rfloor \leq \lfloor \frac{s_j-\min(S)}{d} \rfloor \implies q_i \leq q_j   
\end{equation}

The transition from $<$ to $\leq$ in Eq.~\eqref{eq_qi_leq_qj} follows from the floor function, as it truncates the result of the division operation. Now, let $r_i \geq r_j$. We apply the Quotient-Remainder Theorem \cite{kwong_quotient_remainder_theorem} in Eq. \eqref{eq_quotient_remainder_thm} to derive the expression in Eq. \eqref{eq_ri_lt_rj_init}:

\begin{equation}\label{eq_quotient_remainder_thm}
s_i = dq_i + r_i
\end{equation}

\begin{equation}\label{eq_ri_lt_rj_init}
s_i < s_j \implies dq_i + r_i < dq_j + r_j 
\end{equation}

Next, assume for contradiction that $q_i = q_j$. Then, Eq. \eqref{eq_ri_lt_rj_init} becomes:

\begin{equation}\label{eq_ri_lt_rj}
dq_i + r_i < dq_j + r_j \implies r_i < r_j
\end{equation}

The result, $r_i < r_j$, contradicts the given condition $r_i \geq r_j$ and thus invalidates our assumption that $q_i = q_j$. Therefore, using our finding in Eq. \eqref{eq_qi_leq_qj} that $q_i \leq q_j$, we conclude:

\begin{equation}\label{eq_qi_lt_qj}
    q_i \leq q_j\text{ and } q_i \neq q_j \implies q_i < q_j
\end{equation}
\end{proof}

\begin{definition}\label{stable_def} 
Let $s_i$ and $s_j$ denote any two elements in an input sequence $S$, where $s_i \longrightarrow s_j$ signifies that $s_i$ precedes $s_j$. The stable sorting function $f(S,K)$ sorts each $s_i$ in $S$ based on the corresponding $k_i$ in $K$ and outputs the sorted sequence $S'$ such that if $k_i < k_j$, then $s_i \longrightarrow s_j$ in $S'$. The inherent stability of $f(S,K)$ also infers that if $s_i \longrightarrow s_j$ in $S$ and $k_i = k_j$, then $s_i \longrightarrow s_j$ in $S'$. 
\end{definition}

\begin{theorem}\label{SortTheorem}
For a sequence $S$ of $n$ integers, let $R$ and $Q$ denote the sequences of normalized remainder and quotient keys, respectively. Let each remainder $r_i$ and quotient $q_i$ correspond to each element $s_i$ in $S$ with Eqs.~\eqref{equation_k_r} and \eqref{equation_k_q}, where $d$ denotes a positive integer divisor. For each $s_i$ in $S$, QR Sort guarantees that if $s_i < s_j$, then $s_i \longrightarrow s_j$ in the resulting sorted sequence $S'$.
\end{theorem}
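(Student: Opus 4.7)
The plan is to prove Theorem~\ref{SortTheorem} by a case analysis on the quotient keys, leveraging Lemma~\ref{Lemma} together with the stability property codified in Definition~\ref{stable_def}. Fix any two elements with $s_i < s_j$. From Eq.~\eqref{eq_qi_leq_qj} we already know $q_i \leq q_j$, so there are exactly two possibilities to rule in: either $q_i < q_j$ or $q_i = q_j$.

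First I would handle the strict case $q_i < q_j$. Since $S' = f(S_R, Q)$ is produced by a stable sort on the quotient keys, Definition~\ref{stable_def} immediately gives $s_i \longrightarrow s_j$ in $S'$, regardless of how these elements were arranged in the intermediate sequence $S_R$. This case is essentially immediate from the definition of $f$.

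The more delicate case is $q_i = q_j$. Here the second pass cannot separate $s_i$ and $s_j$ by value of the quotient key, so I must appeal to stability to show that the relative order set up by the first pass is preserved. To pin down that relative order, I would invoke the contrapositive of Lemma~\ref{Lemma}: since $s_i < s_j$ and we have assumed $q_i = q_j$ (so in particular $q_i \not< q_j$), the lemma forces $r_i < r_j$. Then the first stable sort $S_R = f(S, R)$ places $s_i$ before $s_j$ in $S_R$ by Definition~\ref{stable_def}, and the second stable sort on equal quotient keys preserves this order, giving $s_i \longrightarrow s_j$ in $S'$.

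The main obstacle, such as it is, lies in stating the $q_i = q_j$ case cleanly: one must be careful to apply Lemma~\ref{Lemma} in its contrapositive form (if $q_i \not< q_j$ while $s_i < s_j$, then $r_i < r_j$) and then to invoke stability twice — once to order by $R$ in the first pass, once to preserve that order under equal $Q$ keys in the second pass. Combining the two cases exhausts all possibilities consistent with $s_i < s_j$ and yields the conclusion of the theorem. I would also remark briefly on the degenerate situation $\lfloor m/d \rfloor = 0$ flagged in Section~\ref{qr_sort_section_general_algorithm}, where every $q_i = 0$ and the argument collapses to the $q_i = q_j$ branch with a single stable sort by $R$.
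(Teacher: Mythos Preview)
Your proof is correct. The paper's own argument is logically equivalent but splits the case analysis on the \emph{remainder} keys rather than the \emph{quotient} keys: it considers (i) $r_i \geq r_j$, where Lemma~\ref{Lemma} applied directly gives $q_i < q_j$ and the $Q$-sort orders the pair, and (ii) $r_i < r_j$, where the $R$-sort already orders the pair and $q_i \leq q_j$ ensures the $Q$-sort preserves it. You instead split on $q_i < q_j$ versus $q_i = q_j$ and invoke Lemma~\ref{Lemma} in contrapositive form for the latter. Your decomposition is arguably the more natural one, since it mirrors the two-pass structure of the algorithm (either the final pass decides the order outright, or stability defers to the first pass); the paper's version has the minor advantage of citing Lemma~\ref{Lemma} in its stated direction rather than contrapositively. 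Both routes use exactly the same ingredients, so the difference is organizational rather than substantive.
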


\begin{proof}
\noindent Let $S_R = f(S,R)$, as defined by Def. \ref{stable_def}. Also, let $s_i < s_j$. Then, by Eq. \eqref{eq_qi_leq_qj}, $q_i \leq q_j$. This leaves two possible scenarios for the composition of the remainder keys: \\

1. Suppose $r_i \geq r_j$. By Lemma \ref{Lemma}, the condition $q_i < q_j$ holds true, which leads $S' = f(S_R,Q) \text{ to produce } s_i \longrightarrow s_j$. \\

2. Suppose $r_i < r_j$. Thus, $S_R$ produces $s_i \longrightarrow s_j$. Since $q_i \leq q_j$, $S' = f(S_R, Q)$ preserves $s_i \longrightarrow s_j$.
\end{proof}

\begin{theorem}\label{StableTheorem}
For a sequence $S$ of $n$ integers, let $R$ and $Q$ denote the sequences of normalized remainder and quotient keys, respectively. Let each remainder $r_i$ and quotient $q_i$ correspond to each element $s_i$ in $S$ with Eqs.~\eqref{equation_k_r} and \eqref{equation_k_q}, where $d$ denotes a positive integer divisor. QR Sort guarantees stability, such that if $s_i \longrightarrow s_j$ in the input sequence $S$ and $s_i = s_j$, then $s_i \longrightarrow s_j$ in the resulting sequence $S'$.
\end{theorem}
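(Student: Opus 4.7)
The plan is to derive stability directly from the fact that equal input values yield equal remainder keys and equal quotient keys, and then to invoke the assumed stability of the subroutine $f$ twice in succession. The only real content is to check that the hypothesis $s_i = s_j$ propagates through both key computations, and then to chain two applications of Definition~\ref{stable_def}.

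First I would observe that if $s_i = s_j$, then $s_i - \min(S) = s_j - \min(S)$, and applying Eq.~\eqref{equation_k_r} and Eq.~\eqref{equation_k_q} on both sides immediately gives $r_i = r_j$ and $q_i = q_j$. This is an arithmetic triviality that I would state in one line without a display.

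Next I would invoke the hypothesis $s_i \longrightarrow s_j$ in $S$ together with $r_i = r_j$. By the second clause of Definition~\ref{stable_def}, the stable sort $S_R = f(S, R)$ preserves the relative order of $s_i$ and $s_j$, so $s_i \longrightarrow s_j$ in $S_R$. Then I would apply the same clause once more, using $q_i = q_j$ and the fact that $s_i \longrightarrow s_j$ in $S_R$, to conclude $s_i \longrightarrow s_j$ in $S' = f(S_R, Q)$, completing the proof.

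There is no real obstacle here; the argument is essentially a two-line application of the definition. The only thing to be careful about is to phrase the two invocations of stability in the correct order (first on $R$, then on $Q$), matching the two-pass structure $S' = f(f(S, R), Q)$ defined in Section~\ref{qr_sort_section_general_algorithm}. I would keep the write-up short and avoid introducing new notation beyond what Definition~\ref{stable_def} and Theorem~\ref{SortTheorem} already use.
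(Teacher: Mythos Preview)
Your proposal is correct and matches the paper's own proof essentially line for line: derive $r_i=r_j$ and $q_i=q_j$ from $s_i=s_j$, then apply the stability clause of Definition~\ref{stable_def} first to $f(S,R)$ and then to $f(S_R,Q)$. There is nothing to add or change.
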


\begin{proof}
Let $S_R = f(S,R)$, as defined by Def. \ref{stable_def}. Also, let $s_i = s_j$ and $s_i \longrightarrow s_j$.\\

Since $s_i = s_j$, it follows that $r_i = r_j$ and $q_i = q_j$. Since $s_i \longrightarrow s_j$ in $S$, $f(S, R) \text{ preserves } s_i \longrightarrow s_j \text{ in } S_R$ and $f(S_R, Q) \text{ preserves } s_i \longrightarrow s_j \text{ in } S'$ due to the stability of $f$ as defined in Def. \ref{stable_def}.
\end{proof}

\section{Implementation \& Optimizations}\label{implementation_section}
In this section, we present recommendations for implementing QR Sort using Counting Sort \cite{cormen_et_al} subroutines and an optimal divisor value $d$, with additional optimizations for QR Sort depending on the composition of the input sequence $S$. 

\subsection{General Sorting Algorithm Recommendation for QR Sort Subroutines}\label{implementation_section_general_implementation}

QR Sort requires the use of a stable sorting algorithm to sort the remainder and quotient keys $R$ and $Q$. We recommend using Counting Sort \cite{cormen_et_al} Figure \ref{fig:counting_fig} as it constitutes a stable sorting algorithm with an attainable linear time complexity. Here, let Counting Sort implement the stable sorting function $f(S,K)$ utilized by QR Sort, where $K$ equals $R$ or $Q$, and $S$ sorts by $K$. Counting Sort creates a series of $m + 1$ ordered bins labeled from zero to $m$, where $m = \max(K) - \min(K) + 1$. The bin labeled $k_i - \min(K)$ tallies the frequency of each element $k_i$. Counting Sort then decrements the tally of the first bin by 1 and performs a cumulative sum such that the value in each bin increments by the sum of the values in the preceding bins. Next, Counting Sort iterates backwards through $K$ and places each element $k_i$ in its final sorted location using the value in the bin labeled $k_i - \min(K)$ as its index. Immediately after sorting an element $k_i$, Counting Sort decrements the value in the bin labeled $k_i - \min(K)$ before sorting the next element $k_{i-1}$ to compensate for duplicate values in $K$. Finally, given that $K$ sorts $S$, the final sorted order of $K$ also reflects the final sorted order of $S$. 

\begin{figure*}
    \centering
    \includegraphics[width=\linewidth]{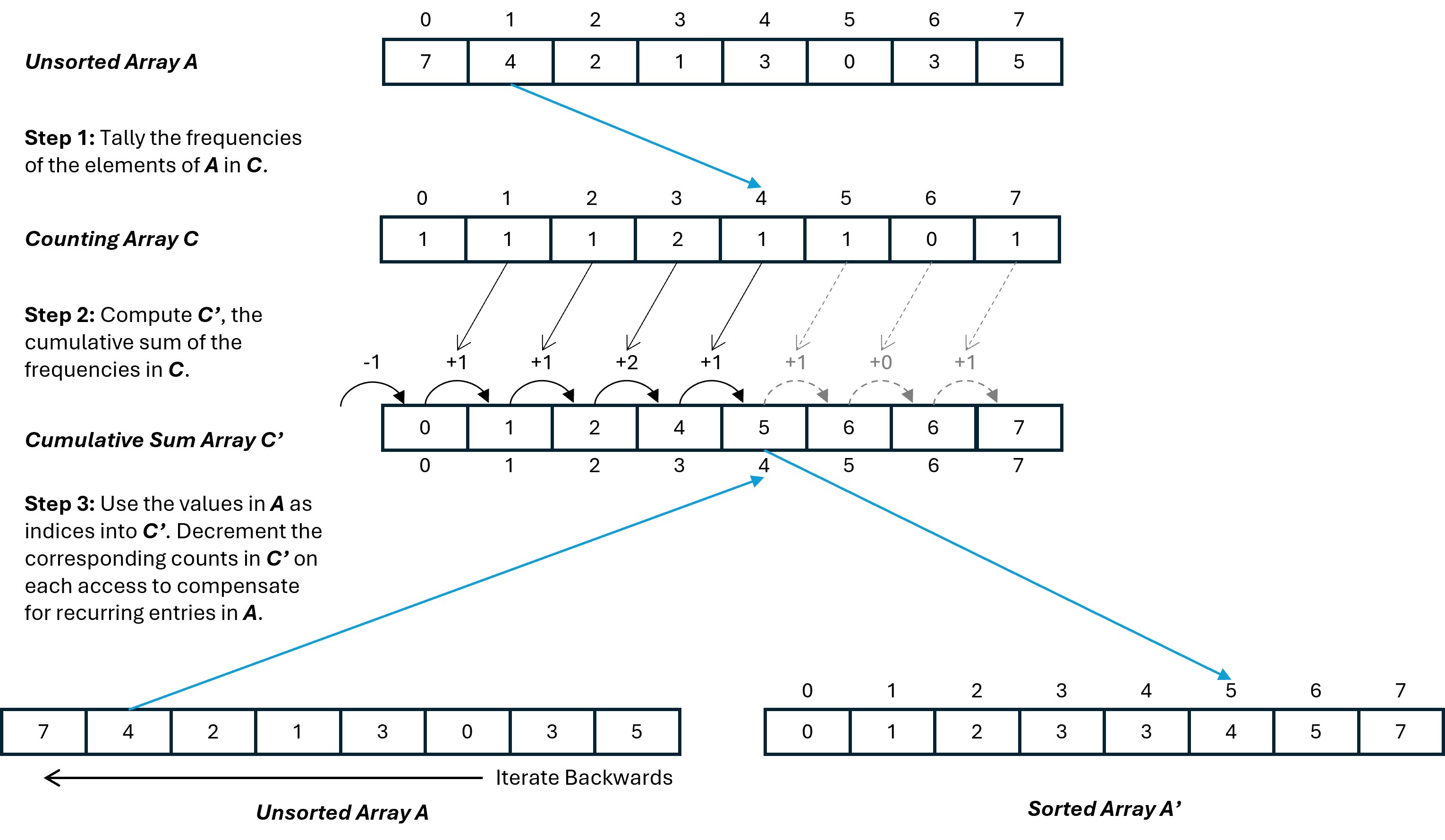}    
    \caption{Visualization of the sorting procedure of the Counting Sort algorithm given an integer array input of size $n = 8$ and a range of~$7$~($m = 8$).}
    \label{fig:counting_fig}
    \Description{}
\end{figure*}

Counting Sort exhibits the time and space complexity $\mathcal{O}(n+m)$ \cite{cormen_et_al}. When sorting by the remainder keys in QR Sort with Counting Sort, the time and space complexity equals $\mathcal{O}(n+d)$. Similarly, when sorting the quotient keys using Counting Sort, the time and space complexity equals $\mathcal{O}(n+\frac{m}{d})$. Thus, the overall time and space complexity exhibited by this implementation of QR Sort equals $\mathcal{O}(n+d+\frac{m}{d})$. 

We present pseudo-code for an implementation of Counting Sort that sorts input sequences based on representative integer keys in Algorithm \ref{appendix_pseudo-code_counting_sort}. We also present pseudo-code for the described implementation of QR Sort using Counting Sort subroutines in Algorithm \ref{appendix_pseudo-code_qr_sort}.

\subsection{Divisor $d$ Selection for Minimizing Time Complexity}\label{implementation_section_recommended_divisor}

We present an approach for selecting a divisor $d$ in Theorem \ref{OptimalDivisorTheorem_RQ} that minimizes the time complexity of QR Sort. We recommend this divisor for general applications of QR Sort, but acknowledge that other divisor selections may optimize other applications. 

\begin{proposition}\label{OptimalDivisorTheorem_RQ}  
Let $S$ denote a sequence of $n$ integers, and let $m = \max(S) - \min(S) + 1 > 0$ denote the range of values in $S$ plus one. The integer divisor $d = \sqrt{m}$ minimizes the time and space complexity of QR Sort, reducing it from $\mathcal{O}(n+d+\frac{m}{d})$ to $\mathcal{O}(n+\sqrt{m})$.
\end{proposition}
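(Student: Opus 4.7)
The plan is to treat this as a straightforward single-variable optimization problem on the cost function $T(d) = n + d + m/d$, since $n$ and $m$ are fixed with respect to the choice of divisor. First I would relax $d$ from a positive integer to a positive real number and apply elementary calculus: differentiating gives $T'(d) = 1 - m/d^{2}$, which vanishes precisely at $d = \sqrt{m}$, and the second derivative $T''(d) = 2m/d^{3}$ is positive for all $d > 0$, confirming that this critical point is a global minimum on $(0, \infty)$.

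Next I would substitute $d = \sqrt{m}$ back into $T(d)$ to obtain the minimum value
\begin{equation*}
T(\sqrt{m}) \;=\; n + \sqrt{m} + \frac{m}{\sqrt{m}} \;=\; n + 2\sqrt{m},
\end{equation*}
which is $\mathcal{O}(n + \sqrt{m})$, matching the stated complexity. As an alternative (and perhaps cleaner) route, I would mention that the AM--GM inequality gives $d + m/d \geq 2\sqrt{m}$ with equality iff $d = \sqrt{m}$, which yields the same conclusion without calculus and makes the minimality completely transparent.

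The main obstacle, and really the only subtlety, is that the statement requires $d$ to be a positive \emph{integer}, whereas the optimum above is the real number $\sqrt{m}$, which need not be an integer. I would resolve this by taking $d = \lfloor \sqrt{m} \rfloor$ or $d = \lceil \sqrt{m} \rceil$ (whichever is positive and non-zero; for $m \geq 1$ both satisfy $d \geq 1$) and observing that $\lceil\sqrt{m}\rceil \leq \sqrt{m}+1$ and $m/\lfloor\sqrt{m}\rfloor \leq m/(\sqrt{m}-1) = \sqrt{m} + \mathcal{O}(1)$ for $m \geq 2$, so the rounded choice still yields $d + m/d = \mathcal{O}(\sqrt{m})$ and the overall bound $\mathcal{O}(n+\sqrt{m})$ is preserved. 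I would also briefly note the edge case $m = 1$, where the quotient sort is vacuous and the bound reduces to $\mathcal{O}(n)$, consistent with the formula.
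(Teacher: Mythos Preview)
Your approach is essentially the same as the paper's: define $f(d)=n+d+m/d$, compute $f'(d)=1-m/d^{2}$, solve for the critical point $d=\sqrt{m}$, verify it is a minimum via $f''(d)=2m/d^{3}>0$, and substitute back to obtain $\mathcal{O}(n+\sqrt{m})$. Your treatment is in fact more careful than the paper's, which does not address the integer-rounding issue or the AM--GM alternative at all.
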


\begin{proof}
Let $f(d) = n + d + \frac{m}{d}$.  We first differentiate $f$ with respect to $d$ to derive Eq. \eqref{eq_derivative}.

\begin{equation}\label{eq_derivative}
f'(d) = 1 - \frac{m}{d^2}
\end{equation}

To minimize $f$, we first set $f'$ to zero and then solve for $d$ in Eq. \eqref{eq_optimal_d}.

\begin{equation}\label{eq_optimal_d}
0 = 1 - \frac{m}{d^2} \implies d = \pm \sqrt{m}
\end{equation}

Since $m > 0$, Eq. \eqref{eq_optimal_d} simplifies to $d = +\sqrt{m}$. To verify this value minimizes $f$, we calculate the second derivative $f''$ in Eq. \eqref{eq_second_derivative}.

\begin{equation}\label{eq_second_derivative} 
f''(d) = \frac{2m}{d^3} > 0 
\end{equation}

Since $f'' > 0$, $f''(\sqrt{m})$ must return a positive concave-up solution, confirming that $f$ reaches a local minimum at this point. Substituting $\sqrt{m}$ into the time complexity expression $\mathcal{O}(n+d+\frac{m}{d})$ reduces it to $\mathcal{O}(n+\sqrt{m})$

\end{proof}

\subsection{Bypassing Quotient Sort to Execute in Linear Time and Space}\label{which_sort}
As described in Section \ref{qr_sort_section_general_algorithm}, QR Sort sorts an input sequence $S$ by first sorting remainder keys $R$, followed by sorting quotient keys $Q$. However, when $n \propto m$ and the divisor $d = m + 1$, QR Sort executes in $\mathcal{O}(n)$ time and space and bypasses the quotient sort, thereby reducing significant processing time. 

\begin{corollary}\label{theorem_4_bypass_quotiet}
Let $S$ denote a sequence of $n$ integers, and let $m$ denote the integer range of values in $S$ plus one such that $m = \max(S) - \min(S) + 1$. Let $d$ denote some positive integer divisor. Recall from Section \ref{implementation_section_general_implementation} the general time complexity of QR Sort $\mathcal{O}(n+d+\frac{m}{d})$. When $d = m + 1$ and $n \propto m$, QR Sort executes in $\mathcal{O}(n)$ time and space and bypasses the quotient sort.
\end{corollary}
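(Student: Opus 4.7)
The plan is to combine two observations: first, that the choice $d = m+1$ collapses every quotient key to zero (and hence, by the remark at the end of Section~\ref{qr_sort_section_general_algorithm}, the quotient sort is skipped entirely), and second, that the remaining single remainder-sort subroutine has complexity $\mathcal{O}(n+d)$, which becomes $\mathcal{O}(n+m)$ and then $\mathcal{O}(n)$ under the assumption $n \propto m$.

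First I would verify the quotient-bypass step. For each $s_i$ in $S$, we have $0 \le s_i - \min(S) \le \max(S) - \min(S) = m-1$. Dividing by $d = m+1$ gives $0 \le \tfrac{s_i - \min(S)}{m+1} < 1$, so by Eq.~\eqref{equation_k_q} every quotient key satisfies $q_i = \lfloor \tfrac{s_i-\min(S)}{m+1} \rfloor = 0$. Equivalently, $\lfloor m/d \rfloor = \lfloor m/(m+1) \rfloor = 0$, which is exactly the trigger condition stated in Section~\ref{qr_sort_section_general_algorithm} for discarding the quotient pass. Hence $S' = f(S, R)$, a single Counting Sort invocation on the remainder keys.

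Next I would substitute $d = m+1$ into the general complexity expression $\mathcal{O}(n+d+\tfrac{m}{d})$. The remainder term $\tfrac{m}{d} = \tfrac{m}{m+1} < 1$ is $\mathcal{O}(1)$, so the expression simplifies to $\mathcal{O}(n+m+1) = \mathcal{O}(n+m)$. This matches the complexity of the single remaining Counting Sort over the key range $[0, d-1] = [0, m]$, which is $\mathcal{O}(n+d) = \mathcal{O}(n+m)$ in both time and space.

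Finally, I would invoke the hypothesis $n \propto m$, i.e., $m = \Theta(n)$, to absorb the $m$ term into $n$ and conclude that the overall time and space complexity is $\mathcal{O}(n)$. There is no genuinely hard step here; the main thing to be careful about is stating the quotient-bypass argument cleanly via the bound $s_i - \min(S) \le m-1 < d$, which is what forces every $q_i$ to vanish and justifies skipping the second Counting Sort subroutine.
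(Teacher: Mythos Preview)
Your proposal is correct and follows essentially the same approach as the paper: verify that $d=m+1$ forces $\lfloor m/d\rfloor=0$ so the quotient pass is skipped, substitute into $\mathcal{O}(n+d+\tfrac{m}{d})$ to obtain $\mathcal{O}(n+m)$, and then use $n\propto m$ to reduce this to $\mathcal{O}(n)$. Your version is slightly more explicit in bounding each individual $q_i$ via $s_i-\min(S)\le m-1<d$, but the underlying argument is identical.
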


\begin{proof}
Recall that in Section \ref{qr_sort_section_general_algorithm}, we stated that QR Sort bypasses the final quotient sort when $\lfloor\frac{m}{d}\rfloor = 0$. The smallest possible $d$ that meets this criterion equals $m+1$, as it represents the smallest integer greater than $m$. When $d=m+1$, the the final time complexity evaluates to $\mathcal{O}(n+m)$. Next, let $n \propto m$, which implies $m = c \cdot n$, where $c$ denotes some constant. Substituting this into $\mathcal{O}(n + m)$ yields Eq. \eqref{n_m_proprtional}.

\begin{equation}\label{n_m_proprtional}
\mathcal{O}(n + m) = \mathcal{O}(n + c \cdot n) = \mathcal{O}(n)
\end{equation}
\end{proof}

\subsection{Subtraction-Free Sorting}\label{implementation_section_subtract_free_sorting}
Under the recommended implementation, QR Sort subtracts $\min(S)$ from each $s_i$ before each division to compute the remainder and quotient keys. This strategy minimizes the number of counting bins and prevents potential errors caused by negative indexing when $S$ contains negative elements. If the input consists of non-negative integers, QR Sort may bypass this step. However, this modification may degrade computational performance when sorting sequences with a substantial $\min(S)$ since QR Sort then generates more bins than necessary compared to employing the original implementation.

\subsection{Bitwise Operation Substitution}\label{implementation_section_bitwise_operations}
Computing the remainder and quotient keys $R$ and $Q$ when $S$ generally comprises large elements involves computationally expensive modulo and division operations. By assigning $d$ as a power of two where $d=2^c$, QR Sort may leverage bitwise operations to compute $R$ and $Q$. Employing bitwise arithmetic enhances algorithm runtime, as modern processor architectures offer optimizations for these operations \cite{yordzhev_bitwise_operations}.

We compute $R$ and $Q$ using Eq. \eqref{equation_rem} and Eq. \eqref{equation_quot}, respectively \cite{yordzhev_bitwise_operations, brown_binary_arithmetic}. Here, $\gg$ represents the right-bit shift operator, $\&$ represents the bitwise AND operator, and $c$ denotes the exponent for $d = 2^c$, specifying the number of bits used in the operations to compute the remainders and quotients.

\begin{equation}\label{equation_rem}
r_i = (s_i - \min(S)) ~ \& ~ (2^c-1)
\end{equation}

\begin{equation}\label{equation_quot}
q_i = (s_i - \min(S)) \gg c
\end{equation}

\section{Analyses}\label{analyses_section}
In this section, we outline a comparative analysis of the computational performance of QR Sort against other sorting algorithms.

\subsection{SortTester\_C Experiment Setup}\label{analyses_section_setup}
We developed SortTester\_C, a program designed to evaluate the computation performance of QR Sort compared to other sorting algorithms when sorting integer arrays of various sizes Figure \ref{fig:block_diagram} \cite{bushman_sort_tester}.

\begin{figure*}\label{block_diagram}
    \centering
    \includegraphics[width=.97\linewidth]{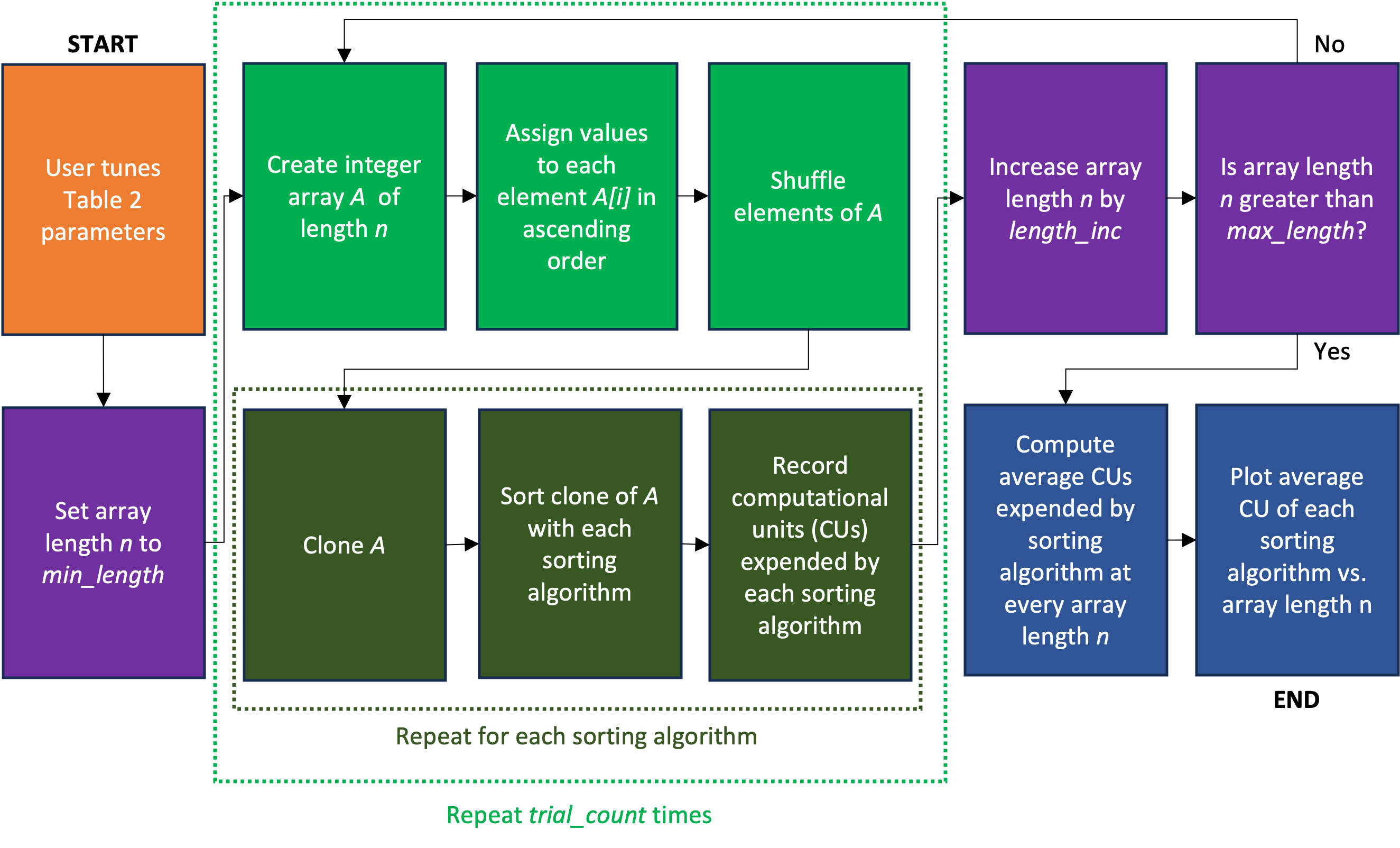}
    \caption{Block diagram of one SortTester\_C experiment used to evaluate the performance of multiple sorting algorithms in terms of computational units (CU).}
    \label{fig:block_diagram}
    \Description{}
\end{figure*}

\begin{table*}
  \caption{Complexity Comparison of Sorting Algorithms}
  \label{tab:complexity_comparison}
  \begin{tabularx}{\textwidth}{l l l l l l} 
    \toprule
    Algorithm & Type & Best Case Time & Worst Case Time & Best Case Space & Worst Case Space\\
    \midrule
    Merge Sort & Comparison-Based & $\mathcal{O}(n\log{n})$ & $\mathcal{O}(n\log{n})$ & $\mathcal{O}(n)$ & $\mathcal{O}(n)$\\
    Quicksort & Comparison-Based & $\mathcal{O}(n\log{n})$ & $\mathcal{O}(n^2)$ & $\mathcal{O}(\log{n})$ & $\mathcal{O}(n)$\\
    Counting Sort & Non-Comparative & $\mathcal{O}(n)$ & $\mathcal{O}(n + m)$ & $\mathcal{O}(n)$ & $\mathcal{O}(n + m)$\\
    Radix Sort & Non-Comparative & $\mathcal{O}(n)$ & $\mathcal{O}(n\log{_n}{m})$ & $\mathcal{O}(n)$ & $\mathcal{O}(n)$\\
    QR Sort & Non-Comparative & $\mathcal{O}(n)$ & $\mathcal{O}(n + \sqrt{m})$ & $\mathcal{O}(n)$ & $\mathcal{O}(n + \sqrt{m})$\\
    \bottomrule
  \end{tabularx}
\end{table*}

The user first tunes the parameters of the program, described in Table \ref{tab:parameters}. SortTester\_C then begins by creating an integer array $A$ of size $n$, where $n$ initially equals \texttt{min\_length}. It assigns a uniform value in ascending order to each element $A[i]$ from \texttt{min\_value} to \texttt{max\_value}. SortTester\_C conducts an experiment composed of \texttt{trial\_count} sequential trials where within each trial, it shuffles the elements in $A$ using the Fisher-Yates Shuffle Algorithm \cite{fisher_yates_shuffle}. After shuffling, the program sorts $A$ with a set of predetermined sorting algorithms and records the computational units expended by each algorithm. Here, computational units encompass array accesses, comparisons, division, modulo, and bitwise operations. We assigned a weight of fifteen computational units to the division and modulo operations as we recognized the increased computational demand for both. SortTester\_C then increments the array length by \texttt{length\_inc} to begin the next experiment. This process repeats until the array length surpasses \texttt{max\_length}.

\begin{table*}
  \caption{Tunable SortTester\_C Parameters}
  \label{tab:parameters}
  \begin{tabularx}{\textwidth}{l p{6cm} p{3.5cm}}
    \toprule
    Parameter Name & Description & Experiment Values \\
    \midrule
    \texttt{min\_length} & Minimum length of $A$ & 10,000 \\
    \texttt{max\_length} & Maximum length of $A$ & 1,000,000 \\
    \texttt{length\_inc} & Value by which the length of $A$ increments after each experiment & 10,000 \\
    \texttt{min\_value} & Smallest value in the array $A$ & 0 \\
    \texttt{max\_value} & Largest value in the array $A$ & 50,000; 500,000; 50,000,000; 500,000,000 \\
    \texttt{trial\_count} & Number of trials conducted in each SortTester\_C experiment & 10 \\
    \bottomrule
  \end{tabularx}
\end{table*}

To visualize the experimental results, SortTester\_C generates plots of the natural logarithm for the average computational units expended by each sorting algorithm against array length $n$. Using a logarithmic scale reveals nuanced patterns obscured by the extensive range of computational units employed by these algorithms.

\subsection{Common Sorting Algorithms}\label{analyses_section_common_algorithms}
We compared the computational performance of QR Sort to four other common sorting algorithms: Merge Sort, Quicksort, Counting Sort, and LSD Radix Sort. Merge Sort and Quicksort both constitute comparison-based divide-and-concur sorting algorithms. Both algorithms perform recursion to split the inputs; however, Merge Sort merges auxiliary array segments while Quicksort performs in-place partitioning. Counting Sort and LSD Radix Sort both constitute non-comparative sorting algorithms that leverage auxiliary bins to sort input sequences. Counting Sort sorts input sequences by element frequencies, whereas Radix Sort sorts by the digits of each sequence element. We present an in-depth explanation of the Counting Sort algorithm in Section \ref{implementation_section_general_implementation}. Additionally, we list the associated time and space complexities of these algorithms in Table \ref{tab:complexity_comparison} \cite{cormen_et_al}. 

\subsection{Comparison of QR Sort to Common Sorting Algorithms}\label{analyses_section_comparison_to_other_algorithms}
To compare the computational performance of QR Sort to common sorting algorithms Merge Sort, Quicksort, Counting Sort, and LSD Radix Sort, we ran four SortTester\_C experiments using the values in Table \ref{tab:parameters}. The four experiments held each input parameter constant, except for \texttt{max\_value} which increased for each experiment. We chose these values to account for randomness and to explore general trends across a large span of array lengths and element ranges. For each experiment, we used Merge Sort, Quicksort, Counting Sort, Radix Sort, and QR Sort to sort the input arrays. For Radix Sort, we set the base of the number system to $n$, as it simplifies its time complexity to $\mathcal{O}(n\log{_n}{m})$ \cite{mit_radix_sort}. 

As described in Section \ref{analyses_section_setup}, SortTester\_C generated a plot of the natural logarithm of the average computational units expended by each sorting algorithm by $n$ for all the four experiments, where each experiment used a different value for \texttt{max\_value}. We aimed to identify the sorting algorithm that expended the least number of computational ~units.

\section{Results \& Discussion}\label{results_and_discussion_section}
We compared the computational efficiency of QR Sort against Merge Sort, Quicksort, Counting Sort, and Radix Sort \cite{cormen_et_al}. In all four of our experiments, QR Sort expended fewer computational units than Merge Sort, Quicksort, and Radix Sort. Counting Sort, however, expended fewer computational units than QR Sort in every trial when $m=50,000$ and $m=500,000$. With $m=5,000,000$, QR Sort outperformed Counting Sort for smaller $n$, though Counting Sort expended fewer computational units for $n \geq 370,000$. Finally, at $m=50,000,000$ QR Sort outperformed Counting Sort for all tested $n$.

Our results indicate that the non-comparative nature of QR Sort enables it to outperform the comparison-based algorithms Merge Sort and Quicksort (Table \ref{tab:complexity_comparison}) for small $m$. Comparison-based algorithms depend solely on $n$, while non-comparative algorithms depend on both $n$ and $m$. For small $m$, non-comparative algorithms achieve near-linear sorting times and outperform comparison-based ones. However, as $m$ increases, non-comparative performance degrades, whereas comparison-based algorithms, bound by $\mathcal{O}(n\log{n})$, maintain consistent performance as they exclude $m$ entirely \cite{cormen_et_al}. For this reason, we expect both Merge Sort and Quicksort to eventually surpass all three non-comparative algorithms as $m$ grows.

Although both constitute non-comparative algorithms, we also observed that QR Sort outperformed Radix Sort in our experiments. We expect this trend to continue for moderately large $m$, as QR Sort executes fewer intensive division and modulo operations to obtain its keys compared to Radix Sort \cite{cormen_et_al, bushman_sort_tester} as shown in Section \ref{analyses_section_comparison_to_other_algorithms}. However, this expectation falls short in theory for very large $m$. While QR Sort exhibits the time complexity of $\mathcal{O}(n + \sqrt{m})$, Radix Sort operates in $\mathcal{O}(n \log_n m)$ time as shown in Table \ref{tab:complexity_comparison}. Thus, when $m > \mathcal{O}(n^2)$, QR Sort fails to achieve linear sorting. In contrast, when $m = \mathcal{O}(n^c)$ for any constant $c \geq 1$, Radix Sort still operates in linear time \cite{mit_radix_sort}.

Compared to both Radix Sort and QR Sort, Counting Sort operates as the most efficient algorithm for smaller $m$ as it performs fewer and simpler operations limited to addition, subtraction, and array accesses. However, the performance of Counting Sort degrades linearly as $m$ grows, as it exhibits the time complexity $\mathcal{O}(n+m)$. In contrast, the performance of QR Sort degrades at a slower rate of $\mathcal{O}(\sqrt{m})$. In our experiments, Counting Sort outperforms QR Sort for smaller $m$ Figures \hyperref[graph_figure]{4A} and \hyperref[graph_figure]{4B}, while QR Sort outperforms Counting Sort for larger $m$ Figures \hyperref[graph_figure]{4C} and \hyperref[graph_figure]{4D}. Therefore, although Counting Sort sometimes outperforms QR Sort, it proves less reliable without prior knowledge of $m$ in the input sequence.

\begin{figure*}
    \centering
    \includegraphics[width=\linewidth]{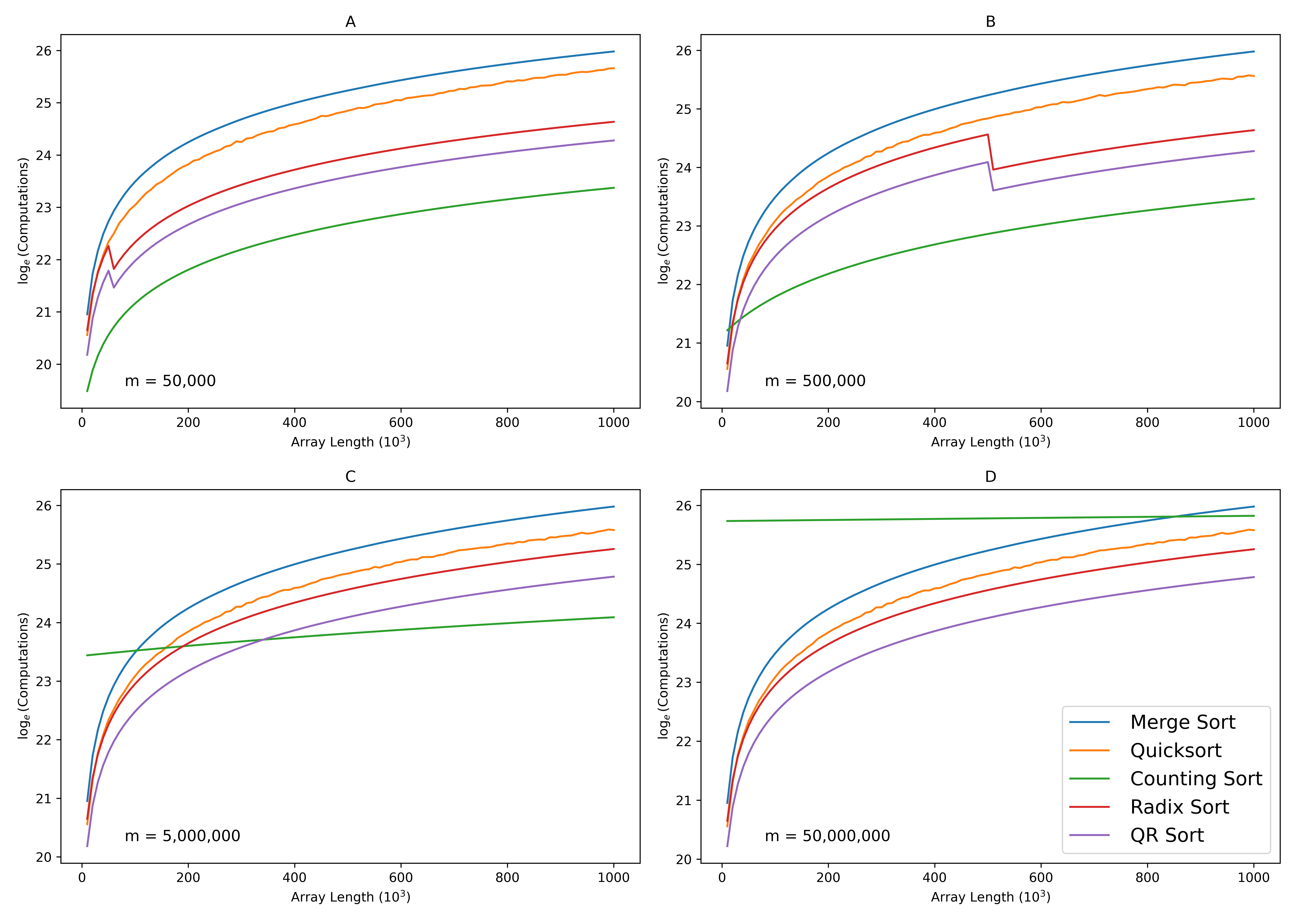}
    \caption{Computational units expended by Merge Sort, Quicksort, Counting Sort, Radix Sort, and QR Sort for four SortTester\_C experiments using input integer arrays $A$ with varying element ranges $m$ ($m = max(A) - min(A) + 1$). For Radix Sort, we set the base of the number system $b$ equals the input array length $n$. (a) $m = 50,000$. QR Sort expended fewer computational units than Merge Sort, Quicksort, and Radix Sort. Counting Sort expended the least computational units compared to all other sorting algorithms. (b) $m = 500,000$. QR Sort expended fewer computational units than Merge Sort, Quicksort, and Radix Sort. Counting Sort generally expended the least computational units compared to all other sorting algorithms. (c) $m = 5,000,000$. QR Sort expended fewer computational units than Merge Sort, Quicksort, and Radix Sort. Counting Sort expended the least computational units compared to all other sorting algorithms when $n \geq 370,000$. QR Sort expended the least computational units compared to all other sorting algorithms when $n < 370,000$. (d) $m = 50,000,000$. QR Sort expended fewer computational units than all other sorting algorithms.}
    \label{graph_figure}
    \Description{}
\end{figure*}

\section{Conclusion}\label{conclusion_section}
In conclusion, we present QR Sort, a novel non-comparative integer sorting algorithm that uses principles derived from the Quotient-Remainder Theorem to sort input sequences stably. We showed that QR Sort outperforms common comparative-based sorting algorithms for a variety of input array lengths and element ranges. We also showed that QR Sort excels in sorting arrays with large array element ranges, and outperforms common non-comparative sorting algorithms in those scenarios. As such, QR Sort could be used in the future to improve the computational efficiency of applications that encounter datasets with large value ranges such as prioritization tasks, graph theory, and database platforms.

\bibliographystyle{ACM-Reference-Format}
\bibliography{bib}

\clearpage

\appendix
\section{Pseudo-Code}\label{appendix_pseudo-code}
\subsection{Counting Sort}\label{appendix_pseudo-code_counting_sort}
\begin{algorithm*}
\caption{Sort Integer Array With Counting Sort Using Provided Keys}
\label{pseudo_code_counting_key_sort}
\begin{algorithmic}[1]

\Function{Counting-Sort}{$A, B, C, K, \text{copy\_B\_to\_A}$}
\State // $A$: Integer array to be sorted of size $n$ (unchanged if copy\_B\_to\_A is false)
\State // $B$: Auxiliary integer array, always contains the sorted version of $A$ after sort
\State // $C$: Counting array, initialized to 0, with size equal to the range of $K$ plus 1
\State // $K$: Integer array that maps $A$[$i$] to the key $K$[$i$] 
\State // copy\_B\_to\_A: If true, copies sorted $B$ back to $A$
\\

\State // $C[k]$ tracks the occurrence frequency of each value $k$ in $K$
\For{$i \gets 0$ to $\text{len(}A\text{)} - 1$} \label{alg:line:occ}
    \State $k \gets K[i]$
    \State $C[k] \gets C[k] + 1$ 
\EndFor
\\

\State // $C[k]$ is updated to track the cumulative frequency of all key values $\leq k$ in $K$
\For{$k \gets 1$ to $\text{len(}C\text{)} - 1$} \label{alg:line:cum_occ}
   \State $C[k] \gets C[k] + C[k-1]$
\EndFor
\\

\State // Use $C$ values to find final $B$ indexes for each $A$ value
\For{$i \gets \text{len(}A\text{)} - 1$ \textbf{down to} $0$} \label{alg:line:find}
    \State $k \gets K[i]$
    \State $B[C[k] - 1] \gets A[i]$
    \State $C[k] \gets C[k] - 1$
\EndFor
\\

\State // Copy $B$ values to $A$ if $\text{copy\_B\_to\_A}$ is true
\If{$\text{copy\_B\_to\_A}$} \label{alg:line:trans}
    \For{$i \gets 0$ to $\text{len(}A\text{)} - 1$} 
        \State $A[i] \gets B[i]$
    \EndFor
\EndIf
\\

\EndFunction
\end{algorithmic}
\bigskip 
\noindent\textbf{Computational Complexities:}
\begin{itemize}
    \item Computing occurrence counts: $\mathcal{O}(n)$, see line \ref{alg:line:occ}
    \item Computing cumulative occurrence counts: $\mathcal{O}(\text{len(}C\text{)})$, see line \ref{alg:line:cum_occ}
    \item Locate final output indexes: $\mathcal{O}(n)$, see line \ref{alg:line:find}
    \item Copy auxiliary array to original array: $\mathcal{O}(n)$ if \text{copy\_B\_to\_A} is true; else $\mathcal{O}(1)$, see line \ref{alg:line:trans}
    \item \textbf{Best-Case Computational Complexity:} $\bm{\mathcal{O}(n)}$
    \item \textbf{Worst-Case Computational Complexity:} $\bm{\mathcal{O}(n + \textbf{len(}C\textbf{)})}$
\end{itemize}
\end{algorithm*}

\clearpage
\subsection{QR Sort}\label{appendix_pseudo-code_qr_sort}
\begin{algorithm*}
\caption{Sort Integer Array With QR Sort Using Given Divisor}
\label{psuedo_code_qr_sort}
\begin{algorithmic}[1]

\Function{QR-Sort}{$A, d$}
\State // $A$: Integer array to be sorted of size $n$ 
\State // $d$: A positive integer divisor
\State // Let $m$ denote the range of values in $A$ plus one: $\max(A) - \min(A) + 1$
\\

\State // Compute the maximum normalized quotient in $A$
\State let $\text{max\_quot} \gets \left( \frac{\max(A) - \min(A)}{d} \right)$ \label{alg:line:find_min_max}
\\

\State // Declare an auxiliary array to temporarily hold the values of $A$
\State let $B[0 \ldots \text{len(}A\text{)} - 1]$ be a new integer array
\\

\State // Declare counting array for remainder keys and compute remainder keys
\State let $C_r[0 \ldots d - 1] \gets \{0\}$
\State let $R[0 \ldots \text{len(}A\text{)} - 1] \gets \left\{ (v - \min(A)) \bmod d \mid v \in A \right\}$ \label{alg:line:compute_remainder_keys}
\\

\State // If the maximum quotient equals zero, the quotient sort can be skipped (see Section \ref{which_sort}) 
\If{$\text{max\_quot} == 0$}
    \State // Execute Counting Sort with remainder keys
    \State $\text{COUNTING-SORT}(A, B, C_r, R, \text{true})$ \label{alg:line:rem_sort_1} 

\Else
    \State // Sort A into B using remainder keys while preserving A, avoiding a redundant linear pass
    \State $\text{COUNTING-SORT}(A, B, C_r, R, \text{false})$ \label{alg:line:rem_sort_2}
    \\
    \State // Declare counting array for quotient keys and compute quotient keys
    \State let $C_q[0 \dots \text{max\_quot}] \gets \{0\}$ \label{alg:line:reinit_cnt}
    \State let $Q[0 \ldots \text{len(}B\text{)}-1] \gets \left\{ \left\lfloor \frac{v - \min(B)}{d} \right\rfloor \mid v \in B \right\}$ \label{alg:line:compute_quotient_keys}
    \\
    \State // Sort $B$ into $A$ using quotient keys
    \State $\text{COUNTING-SORT}(B, A, C_q, Q, \text{false})$ \label{alg:line:quot_sort}

\EndIf
\\
\EndFunction
\end{algorithmic}

\bigskip 
\noindent\textbf{Computational Complexities:}
\begin{itemize}
    \item Computing maximum quotient value: $\mathcal{O}(n)$, see line \ref{alg:line:find_min_max}
    \item Computing remainders keys: $\mathcal{O}(n)$, see line \ref{alg:line:compute_remainder_keys}
    \item Counting Sort using the remainders keys: $\mathcal{O}(n + d)$, see line \ref{alg:line:rem_sort_1}, \ref{alg:line:rem_sort_2}
    \item Computing quotient keys: $\mathcal{O}(n)$, see line \ref{alg:line:compute_quotient_keys}
    \item Counting Sort using the quotients keys: $\mathcal{O}(n + \frac{m}{d})$, see line \ref{alg:line:quot_sort}
    \item \textbf{Best-Case Computational Complexity:} $\bm{\mathcal{O}(n)}$
    \item \textbf{Worst-Case Computational Complexity:} $\bm{\mathcal{O}(n + d + \frac{m}{d})}$
\end{itemize}
\end{algorithm*}

\end{document}